\documentclass{article}

\usepackage{microtype}
\usepackage{graphicx}
\usepackage{subcaption}
\usepackage{booktabs} 

\usepackage{hyperref}

\usepackage[accepted]{icml2026}
\setcitestyle{numbers,square,sort&compress,comma}

\usepackage{amsmath}
\usepackage{amssymb}
\usepackage{mathtools}
\usepackage{amsthm}
\usepackage{enumitem}

\renewcommand{\vec}[1]{\boldsymbol{#1}}

\usepackage[capitalize,noabbrev]{cleveref}

\theoremstyle{plain}
\newtheorem{theorem}{Theorem}          
\newtheorem{proposition}{Proposition}  

\theoremstyle{definition}
\newtheorem{definition}{Definition}    
\newtheorem{assumption}{Assumption}    

\usepackage[textsize=tiny]{todonotes}

\icmltitlerunning{Socially-Weighted Alignment: A Game-Theoretic Framework for Multi-Agent LLM Systems}

\begin{document}

\twocolumn[
\icmltitle{Socially-Weighted Alignment: A Game-Theoretic Framework for \\ Multi-Agent LLM Systems}

\vskip 0.15in

\begin{center}
{\fontsize{11.25pt}{13.5pt}\bfseries\selectfont
Furkan Mumcu \quad
Yasin Yilmaz
}

\vskip 0.1in  

{\fontsize{11.25pt}{13.5pt}
University of South Florida
}

\vskip 0.08in 

{\tt\small
\{furkan, yasiny\}@usf.edu \quad
}
\end{center}

\vskip 0.25in
]



\printAffiliationsAndNotice{}  

\begin{abstract}

Deploying large language model (LLM) agents in shared environments introduces a fundamental tension between individual alignment and collective stability: locally rational decisions can impose negative externalities that degrade system-level performance. We propose Socially-Weighted Alignment (SWA), a game-theoretic framework that modifies inference-time decision making by interpolating between an agent's private objective and an estimate of group welfare via a social weight $\lambda\in[0,1]$. In a shared-resource congestion game with $n$ agents and congestion severity $\beta$, we show that SWA induces a critical threshold
$\lambda^*=(n-\beta)/(n-1)$ above which agents no longer have marginal incentive to increase demand under overload, yielding a phase transition from persistent congestion to stable operation near capacity. We further provide an inference-time algorithmic instantiation of SWA that does not require parameter updates or multi-agent reinforcement learning, and use a multi-agent simulation to empirically validate the predicted threshold behavior.
\end{abstract}

\section{Introduction}

LLM agents are increasingly deployed not as isolated assistants but as interacting components in shared environments. Examples include multi-agent tool-use pipelines, collaborative editing systems, and agent teams that coordinate over shared memory, budgets, or external state. In these settings, standard alignment at inference time typically optimizes each agent's local objective, such as instruction following or task completion for the current user request. However, when agents share limited resources, locally rational actions can impose negative externalities on others. In a collaborative coding workflow, for example, one agent’s ``efficient" refactor may generate costly merge conflicts for teammates; in a tool-use swarm, aggressive retries by individual agents can trigger global API rate limits, halting the entire system.
The resulting dynamics resemble classic social dilemmas: individually optimal behavior can degrade collective performance through congestion, duplication, or conflict.

Most existing multi-agent LLM systems orchestrate collaboration through procedural design choices such as role assignment, turn-taking, debate, voting, or reflection, as implemented in popular multi-agent frameworks.
In parallel, recent work evaluates LLM agents in game-theoretic settings, including social dilemmas and congestion-style games, primarily to characterize emergent strategic behavior or to measure prosocial tendencies.
In contrast, our goal is incentive-level orchestration. Rather than prescribing an interaction protocol, we modify the inference-time objective each agent optimizes by interpolating between a private score and a group-welfare score with a single parameter $\lambda$.
This lets us treat the resulting multi-agent system as an induced game, derive a closed-form stability threshold in a shared-resource congestion environment, and test a falsifiable prediction about how the transition shifts with environment parameters.

This paper studies a simple question: can we modify inference-time decision making so that agents internalize a controlled amount of group welfare, while preserving decentralization and avoiding costly training-time interventions. We propose Socially-Weighted Alignment (SWA), a framework in which each agent optimizes an effective utility that interpolates between a private score and a group score. The interpolation coefficient $\lambda$ governs a continuum of behaviors from individually aligned play ($\lambda=0$) to fully welfare-driven play ($\lambda=1$). Conceptually, $\lambda$ acts as a knob for trading off individual goal pursuit against system-level stability.

We formalize SWA as an induced game and analyze it in a shared-resource congestion environment. In this setting, $n$ agents choose demands that collectively determine an aggregate load $X_t$ relative to a capacity $C$, and overload incurs a congestion penalty scaled by a severity parameter $\beta$. Our analysis yields a sharp prediction: there exists a critical social weight $\lambda^*=(n-\beta)/(n-1)$ such that, in the overloaded regime, agents lose the marginal incentive to increase demand when $\lambda\ge \lambda^*$. This provides a mechanistic explanation for how socially weighted objectives can produce a coordination-driven phase transition from persistent overload to stable operation near capacity.

To operationalize SWA for LLM agents, we introduce an inference-time protocol that evaluates a finite set of candidate actions using natural-language scoring rules that approximate the private and group components of the SWA objective. The protocol does not require parameter updates, centralized control, or multi-agent reinforcement learning, and can be applied across model families.

We empirically evaluate SWA in a multi-agent simulation of a shared-resource congestion environment. We measure overload rate (OR) and welfare improvement relative to the $\lambda=0$ baseline. Across multiple base models, increasing $\lambda$ induces a sharp reduction in overload and a corresponding improvement in welfare, consistent with the phase-transition behavior predicted by SWA. We further vary congestion severity as an ablation and observe that the transition shifts in the direction implied by the theoretical threshold, and that higher severity increases the welfare cost of remaining in the overloaded regime.

In summary, this work advances a game-theoretic view of agentic LLM system design by treating orchestration as inference-time incentive design. We contribute:

\begin{itemize}

    \item A general welfare-weighted inference-time objective, SWA, that reshapes the induced game by interpolating private utility and estimated group welfare without parameter updates,
    \item An analytically tractable characterization of system stability in a canonical shared-resource congestion game, including a closed-form critical threshold $\lambda^*=(n-\beta)/(n-1)$ that predicts a sharp transition from persistent overload to stable operation near capacity,
    \item Systematic empirical validation that this threshold phenomenon is robust across multiple small and efficient base model families and shifts predictably under a congestion-severity ablation.
\end{itemize}

\section{Mathematical Framework}

We formalize interactions among Large Language Model (LLM) agents as a multi-agent decision process and introduce {Socially-Weighted Alignment (SWA)}, a family of \emph{effective objectives} that interpolate between purely self-interested instruction following and group-aware behavior.
Importantly, our setting is {not multi-agent reinforcement learning (MARL)}. We do not train policies through repeated rollouts or reward optimization. Instead, we study {inference-time} decision rules for fixed, pretrained LLM agents operating in shared tool-using environments.

\subsection{Preliminaries: A Multi-Agent Agentic Environment}

We consider a multi-agent environment $\mathcal{G} = \langle N, \mathcal{S}, \mathcal{A}, \mathcal{P}, \mathcal{R}, \gamma \rangle$, where:

\begin{itemize}[leftmargin=*]
    \item \textit{Agents ($N$):} $N=\{1,\dots,n\}$ is a set of LLM agents (e.g., roles within an agentic workflow).
    \item \textit{State Space ($\mathcal{S}$):} $s_t \in \mathcal{S}$ denotes the global state at time $t$. In agentic LLM systems, $s_t$ may include the shared conversation/context, tool outputs, and external state (e.g., repository state, shared memory). Agents may only observe a \emph{projection} of $s_t$ (partial observability).
    \item \textit{Action Space ($\mathcal{A}$):} $\mathcal{A}=\mathcal{A}_1 \times \dots \times \mathcal{A}_n$ is the joint action space. An action $a_i \in \mathcal{A}_i$ is a message and/or tool invocation produced by agent $i$ at inference time.
    \item \textit{Transition Dynamics ($\mathcal{P}$):} The environment evolves via $\mathcal{P}(s_{t+1}\mid s_t,\vec{a}_t)$, where $\vec{a}_t=(a_{1,t},\dots,a_{n,t})$. In practice, $\mathcal{P}$ captures how the shared workspace changes (e.g., files edited, memory updated, context length consumed).
    \item \textit{Per-agent Reward / Score ($\mathcal{R}$):} $R_i:\mathcal{S}\times\mathcal{A}\rightarrow\mathbb{R}$ is the intrinsic objective for agent $i$ (e.g., task progress, correctness, instruction adherence). We treat $R_i$ as an abstract score; in agentic LLM systems it is often \emph{not directly observed} and must be approximated by heuristics or model-based evaluators.
    \item \textit{Discount Factor ($\gamma$):} $\gamma \in [0,1)$ controls the horizon.
\end{itemize}

Each agent selects actions using an inference-time policy $\pi_i(a_i \mid o_{i,t})$ based on its observation $o_{i,t}$ (e.g., current context and tool outputs). Unless explicitly stated, we assume model weights are fixed during deployment.

\subsection{Baseline: Individually Aligned Agents}

In standard agentic deployments, agents are aligned individually. Agent $i$ selects actions to maximize its own expected discounted return:
\begin{equation}
    V_i^{\pi}(s) = \mathbb{E}_{\pi}\left[\sum_{t=0}^{\infty}\gamma^t R_i(s_t,\vec{a}_t)\,\bigg|\, s_0=s \right].
\end{equation}

A joint policy $\vec{\pi}^*=(\pi_1^*,\dots,\pi_n^*)$ is a {Nash equilibrium (NE)} if no agent can improve its own value by unilateral deviation:
\begin{equation}
    V_i^{(\pi_i^*,\pi_{-i}^*)}(s)\ge V_i^{(\pi_i,\pi_{-i}^*)}(s),\quad \forall i\in N,\ \forall \pi_i,
\end{equation}
where $\pi_{-i}^*$ denotes all policies except $i$. 

\noindent \textit{Critique:} In shared-resource or shared-state workflows, maximizing $R_i$ can impose negative externalities on other agents (e.g., context-window depletion, redundant tool calls, inconsistent edits). Consequently, equilibria under individually aligned objectives can be systemically inefficient.

\subsection{Socially-Weighted Alignment (SWA)}

We introduce a simple family of {effective objectives} that internalize externalities by blending self-interest with a group objective.

\textit{Global welfare:} We define the global welfare as the aggregate intrinsic score,
\begin{equation}
    W(s,\vec{a})=\sum_{j=1}^{n}R_j(s,\vec{a}).
\end{equation}

\textit{Social alignment coefficient:}
Each agent $i$ is assigned a social alignment coefficient $\lambda_i\in[0,1]$ that controls the trade-off between its own objective and group welfare.

\begin{definition}[SWA social utility]
The SWA effective utility for agent $i$ is
\begin{equation}
    U_i^{\lambda}(s,\vec{a})
    =(1-\lambda_i)\,R_i(s,\vec{a})+\lambda_i\,\frac{1}{n}W(s,\vec{a}).
\end{equation}
\end{definition}

This defines a spectrum of behaviors:
\begin{enumerate}[leftmargin=*]
    \item \textit{Pure self-interest ($\lambda_i=0$):} recovers individually aligned deployment.
    \item \textit{Pure group-interest ($\lambda_i=1$):} the agent optimizes average welfare.
    \item \textit{Mixed motive ($0<\lambda_i<1$):} balances task progress with system stability.
\end{enumerate}

\subsection{Equilibrium of the SWA-Induced Game}

SWA does not introduce a new equilibrium concept; rather, it induces a new game with utilities $\{U_i^{\lambda}\}_{i=1}^n$.
We study standard equilibrium behavior under these transformed utilities.

\begin{definition}[SWA Nash equilibrium]
A joint policy $\vec{\pi}^*$ is an {SWA equilibrium} if it is a Nash equilibrium with respect to the SWA returns:
\begin{equation}
    J_i^{\lambda}(\pi_i^*,\pi_{-i}^*) \ge J_i^{\lambda}(\pi_i,\pi_{-i}^*)
    \quad \forall i\in N,\ \forall \pi_i,
\end{equation}
where
\begin{equation}
    J_i^{\lambda}(\vec{\pi})=\mathbb{E}_{\vec{\pi}}\left[\sum_{t=0}^{\infty}\gamma^t U_i^{\lambda}(s_t,\vec{a}_t)\right].
\end{equation}
\end{definition}

\textit{Interpretation: ``price of alignment.''}
Varying $\lambda_i$ traces a trade-off curve between (i) an agent's task-specific score and (ii) multi-agent system health (e.g., fewer collisions, less wasted context, higher joint task success). In Section~4, we introduce inference-time procedures that approximate $U_i^{\lambda}$ in agentic LLM systems where $R_i$ and $W$ are not directly observable.

\section{Theoretical Analysis}

In this section, we analyze how the Social Alignment Coefficient $\lambda$ affects stability in a shared-resource setting.
We model agentic externalities (e.g., shared context-window budget, shared tool rate limits) via a {resource congestion game} \cite{rosenthal1973congestion}, a continuous variant of the Tragedy of the Commons \cite{hardin1968tragedy}.

\subsection{Problem Setup and Assumptions}

We consider $n$ agents competing for a shared resource with capacity $C>0$.
Let $x_i \in [0, x_{\max}]$ denote the resource demand of agent $i$ at a decision point, $\vec{x}=(x_1,\ldots,x_n)$, and let $X=\sum_{j=1}^n x_j$ denote the total load.\footnote{Bounding $x_i$ matches practical agentic deployments (finite context/tool budgets) and avoids degenerate unbounded equilibria in linear models. If $x_{\max}=\infty$, the ``collapse'' equilibrium corresponds to $x_i \to \infty$ whenever the marginal incentive remains positive.}

\begin{assumption}[Linear Congestion Cost]
When total demand exceeds capacity ($X>C$), the system incurs a global penalty proportional to the excess demand.
\end{assumption}

We define the intrinsic (individually aligned) reward for agent $i$ as:
\begin{equation}
R_i(\vec{x}) = x_i - \frac{\beta}{n}\max\left(0, X-C\right),
\end{equation}
where $\beta>1$ is a \emph{system criticality} parameter. The factor $\beta/n$ represents the individual's share of the global degradation.

\subsection{Baseline Equilibrium: Tragedy of the Commons}

\begin{proposition}[Tragedy of the Commons (Baseline)]
If agents are purely self-interested ($\lambda=0$) and the penalty is sufficiently diluted such that $\beta<n$, then any best response increases consumption whenever $X>C$. In particular, if $x_{\max}<\infty$, the Nash equilibrium is maximal feasible consumption ($x_i=x_{\max}$ for all $i$); if $x_{\max}=\infty$, the equilibrium corresponds to unbounded consumption ($x_i \to \infty$), i.e., collapse.
\end{proposition}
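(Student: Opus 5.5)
The plan is to show that, at $\lambda=0$, each agent's payoff $R_i$ is strictly increasing in its own demand $x_i$ for every fixed profile $\vec{x}_{-i}$ of the others. This makes $x_i=x_{\max}$ a strictly dominant strategy, and the equilibrium claim follows directly.

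\textbf{Step 1: fix the others and write the one-variable payoff.} Fix agent $i$ and the others' demands $\vec{x}_{-i}$, and set $X_{-i}=\sum_{j\neq i}x_j$. Then $X=x_i+X_{-i}$ and
\begin{equation*}
f(x_i)=R_i(x_i,\vec{x}_{-i})=x_i-\frac{\beta}{n}\max\left(0,\,x_i+X_{-i}-C\right).
\end{equation*}
This $f$ is continuous and piecewise linear in $x_i$. It has a single kink at $x_i=C-X_{-i}$, if that point lies in the domain.

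\textbf{Step 2: compute the slopes.} On the uncongested piece ($X<C$) the slope is $1$. On the congested piece ($X>C$) the slope is $1-\beta/n$, which is strictly positive exactly because $\beta<n$. This is the "diluted penalty" hypothesis, and it yields the first claim: whenever $X>C$, the marginal incentive is $\partial R_i/\partial x_i=1-\beta/n>0$. Any best response therefore pushes $x_i$ upward.

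\textbf{Step 3: handle the kink.} At $X=C$ the left derivative is $1$ and the right derivative is $1-\beta/n$; both are positive. Since $f$ is continuous, it is strictly increasing on all of $[0,x_{\max}]$. This kink is the only step needing care; continuity of the max-term settles it. The conclusion is independent of $\vec{x}_{-i}$, so $x_{\max}$ is the unique best response to every profile.

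\textbf{Step 4: conclude in both cases.} If $x_{\max}<\infty$, the fact that $x_{\max}$ is a strictly dominant strategy gives a unique Nash equilibrium $x_i=x_{\max}$ for all $i$. In particular, when $nx_{\max}>C$ this equilibrium sits in the overloaded regime. If $x_{\max}=\infty$, then $f$ is strictly increasing and unbounded above on $[0,\infty)$, so no best response exists. Every improvement path has $x_i\to\infty$, and hence $X\to\infty$. I would state this as the "collapse" in the sense of the paper's footnote: unbounded consumption along best-response dynamics, rather than an attained equilibrium.
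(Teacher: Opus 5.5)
Your proof is correct and follows the paper's route. The paper's entire argument is the marginal incentive $\partial R_i/\partial x_i = 1-\beta/n>0$ for $X>C$.

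Your extra checks are a genuine improvement on that argument. You also verify the slope $1$ below capacity and the behavior at the kink. These give strict dominance of $x_{\max}$ against every $\vec{x}_{-i}$, and hence uniqueness of the equilibrium. They also cover the case $nx_{\max}\le C$, where a computation restricted to the overloaded regime says nothing.

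One correction in Step 4: the claim ``every improvement path has $x_i\to\infty$'' is false as stated. Since $f$ is strictly increasing, any bounded strictly increasing sequence, such as $x_i^{(k)}=1-1/k$, is also an improvement path. The accurate statement is that $f$ has no maximizer and satisfies $f(x_i)\le x_i$. It is also eventually increasing with slope $1-\beta/n>0$. Hence $f(x_i)\to+\infty$ exactly along sequences with $x_i\to\infty$, and that is the sense in which ``collapse'' should be read.
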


\begin{proof}
For $X>C$, the marginal incentive for agent $i$ is
\begin{equation}
\frac{\partial R_i}{\partial x_i} = 1-\frac{\beta}{n}.
\end{equation}
If $\beta<n$, then $\frac{\partial R_i}{\partial x_i}>0$, so increasing $x_i$ strictly increases $R_i$ whenever the system is overloaded. Thus the best response is to increase demand up to the maximum feasible level (or diverge if unbounded).
\end{proof}

\subsection{SWA Utility and the Critical Alignment Threshold}

We now analyze stability under the SWA utility from Section~2:
\begin{equation}
U_i^{\lambda}(\vec{x})=(1-\lambda)R_i(\vec{x})+\lambda \frac{1}{n}W(\vec{x}),
\end{equation}
where $W(\vec{x})=\sum_{j=1}^n R_j(\vec{x})$.

\begin{theorem}[Stability Condition under SWA (Average Welfare)]
Assume $x_{\max}=\infty$ (or consider marginal incentives for $X>C$). The system is stabilized in the sense that agents have no incentive to increase consumption once overloaded if and only if
\begin{equation}
\lambda \ge \frac{n-\beta}{n-1}.
\end{equation}
Equivalently, for $\lambda$ above this threshold, the marginal SWA incentive satisfies $\frac{\partial U_i^\lambda}{\partial x_i}\le 0$ for $X>C$.
\end{theorem}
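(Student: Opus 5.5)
Since the whole argument reduces to one partial derivative, I would prove the theorem by computing the marginal SWA incentive explicitly in the overloaded region $X>C$ and then solving a linear inequality in $\lambda$.

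The first step is to simplify the welfare term. On the region $X>C$, where $\max(0,X-C)=X-C$, I would write
\begin{equation*}
W(\vec{x})=\sum_{j=1}^n x_j-\beta(X-C)=X-\beta(X-C),
\end{equation*}
because the $n$ shares $\beta/n$ add up to the full penalty. This gives $\partial W/\partial x_i = 1-\beta$, so $\frac{1}{n}\partial W/\partial x_i=(1-\beta)/n$.

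The second step combines this with the private marginal incentive from the baseline Tragedy of the Commons proposition, $\partial R_i/\partial x_i = 1-\beta/n$. This yields
\begin{equation*}
\frac{\partial U_i^\lambda}{\partial x_i}=(1-\lambda)\Bigl(1-\frac{\beta}{n}\Bigr)+\lambda\,\frac{1-\beta}{n}
=\frac{(n-\beta)-\lambda(n-1)}{n}.
\end{equation*}
The derivative is constant on $X>C$, since the utility is affine there. Consequently, ``no incentive to increase consumption once overloaded'' is equivalent to the sign condition $\partial U_i^\lambda/\partial x_i\le 0$ holding uniformly on that region.

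The third step solves for $\lambda$. Because $n\ge 2$, we have $n-1>0$, so the condition is equivalent to $\lambda\ge (n-\beta)/(n-1)$. Both directions of the ``if and only if'' follow at once: if $\lambda$ lies below the threshold, the derivative is strictly positive everywhere on $X>C$, and agent $i$ strictly gains by raising $x_i$. This mirrors the baseline argument with $x_{\max}=\infty$.

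I would add two brief remarks. First, when $\beta\ge n$ the threshold is nonpositive, so every $\lambda\in[0,1]$ stabilizes; this is consistent with the baseline proposition. Second, $\beta>1$ ensures $(n-\beta)/(n-1)<1$, so the threshold is attainable within $[0,1]$.

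The main obstacle is interpretive rather than computational. The one point needing care is that ``stabilized'' must be read as a statement about the one-sided marginal incentive on the open region $X>C$. At $X=C$ the penalty has a kink, so I would state the result for $X>C$ only, or in terms of right derivatives. With that reading, the affine structure makes the equivalence exact.
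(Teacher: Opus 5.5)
Your proposal is correct and follows the paper's proof essentially step for step: you compute $\partial W/\partial x_i = 1-\beta$ on $X>C$, combine it with $\partial R_i/\partial x_i = 1-\beta/n$, and solve the linear inequality $(n-\beta)-\lambda(n-1)\le 0$ for $\lambda$. Your extra remarks make explicit several points the paper leaves implicit: that $n\ge 2$ is needed to divide by $n-1$, that the constant derivative on $X>C$ gives both directions of the equivalence at once, and that the kink at $X=C$ requires care.
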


\begin{proof}

{Step 1: Welfare gradient.}
For $X>C$,
\[
W(\vec{x})=\sum_{j=1}^n \left(x_j - \frac{\beta}{n}(X-C)\right)=X-\beta(X-C),
\]
so
\begin{equation}
\frac{\partial W}{\partial x_i}=1-\beta.
\end{equation}

\noindent{Step 2: SWA utility gradient.}
For $X>C$,
\begin{align}
\frac{\partial U_i^\lambda}{\partial x_i}
&=(1-\lambda)\frac{\partial R_i}{\partial x_i}+\lambda\frac{1}{n}\frac{\partial W}{\partial x_i} \\
&=(1-\lambda)\left(1-\frac{\beta}{n}\right)+\lambda\frac{1}{n}(1-\beta).
\end{align}

\noindent {Step 3: Threshold.}
Stability requires $\frac{\partial U_i^\lambda}{\partial x_i}\le 0$, i.e.,
\[
(1-\lambda)\left(1-\frac{\beta}{n}\right)+\lambda\frac{1}{n}(1-\beta)\le 0.
\]
Multiplying by $n$ and simplifying:
\[
(1-\lambda)(n-\beta)+\lambda(1-\beta)\le 0,
\]
which rearranges to
\begin{align*}
(n-\beta) + \lambda\big((1-\beta)-(n-\beta)\big)\le 0
\\
(n-\beta)+\lambda(1-n)\le 0.    
\end{align*}
Therefore,
\[
\lambda \ge \frac{n-\beta}{n-1}.
\]
\end{proof}
\section{Algorithmic Implementation}

The theoretical framework assumes access to well defined utilities, while LLM based agents operate through natural language and do not expose explicit reward computations. We therefore implement Socially-Weighted Alignment (SWA) as an inference time decision rule that approximates the social utility in Eq.~(4) using structured self evaluation. 

We refer to this mechanism as {Constitutional Reflection}, in which an agent estimates the externalities of candidate actions on other agents and on shared resources before committing to an action.

\subsection{Algorithm: Socially-Weighted Inference (SWI)}

Each agent selects an action by generating a small candidate set and scoring each candidate along two axes, private utility and predicted collective impact. The procedure is:

\begin{enumerate}[leftmargin=*]
    \item \textbf{Candidate Generation:}
    Given the current context, the agent generates a candidate action set
    $\mathcal{A}_{\text{cand}}=\{a_1,\ldots,a_k\}$.
    Candidate actions may be natural language responses, tool calls, or edit proposals.
    In practice, $\mathcal{A}_{\text{cand}}$ may be formed from a mixture of model-proposed candidates and a small fixed anchor set to ensure coverage of reasonable actions.

    \item \textbf{Dual Factor Scoring:}
    For each $a \in \mathcal{A}_{\text{cand}}$, the agent produces:
    \begin{itemize}[leftmargin=*]
        \item $S_{\text{self}}(a)$, a score for how well $a$ satisfies the agent's private instruction and task objective.
        \item $S_{\text{group}}(a)$, a score for predicted externalities (e.g., expected consumption of shared resources, likelihood of contradiction with peers, or risk of merge and coordination failures).
    \end{itemize}
    In general, both scores can be implemented by the base model itself, by a separate judge model, or by model-based heuristics derived from an explicit environment or resource model. 
    For example, $S_{\text{group}}$ can be elicited via a structured perspective-taking rubric such as
    \textit{``How does this action affect each other agent's ability to succeed, and what shared resources does it consume?''}
    Alternatively, when the environment dynamics are known, $S_{\text{group}}$ can be computed as a one-step estimate of social welfare under a belief over other agents' actions inferred from public history. 
    In such cases, $S_{\text{group}}$ may also incorporate lightweight regularizers (e.g., encouraging efficient utilization near a target load) and be normalized for stability, while leaving the selection rule unchanged.

    \item \textbf{Selection:}
    The agent selects the candidate that maximizes the SWI objective:
    \begin{equation}
        a^\ast = \arg\max_{a \in \mathcal{A}_{\text{cand}}}
        \left[(1-\lambda)S_{\text{self}}(a) + \lambda S_{\text{group}}(a)\right].
        \label{eq:swi_selection}
    \end{equation}
\end{enumerate}

\noindent
In our experiments, the base model, decoding settings, and candidate set size $k$ are held fixed across conditions, and $\lambda$ is the primary control parameter. 

\noindent
Across environments, we instantiate $S_{\text{self}}$ and $S_{\text{group}}$ using the most direct available estimators of private utility and collective impact. In settings with explicit environment models (e.g., congestion), we use model-based scoring; in more open-ended settings, these scores can be provided by an LLM judge or structured rubric.

\section{Experimental Evaluation}
To test the theoretical predictions in Section 3, in particular the existence of a critical stability threshold $\lambda^\ast$ from Theorem~1, we evaluate Socially-Weighted Alignment (SWA) in a simulation setting.

\begin{figure}[t]
    \centering
    \includegraphics[width=0.85\linewidth]{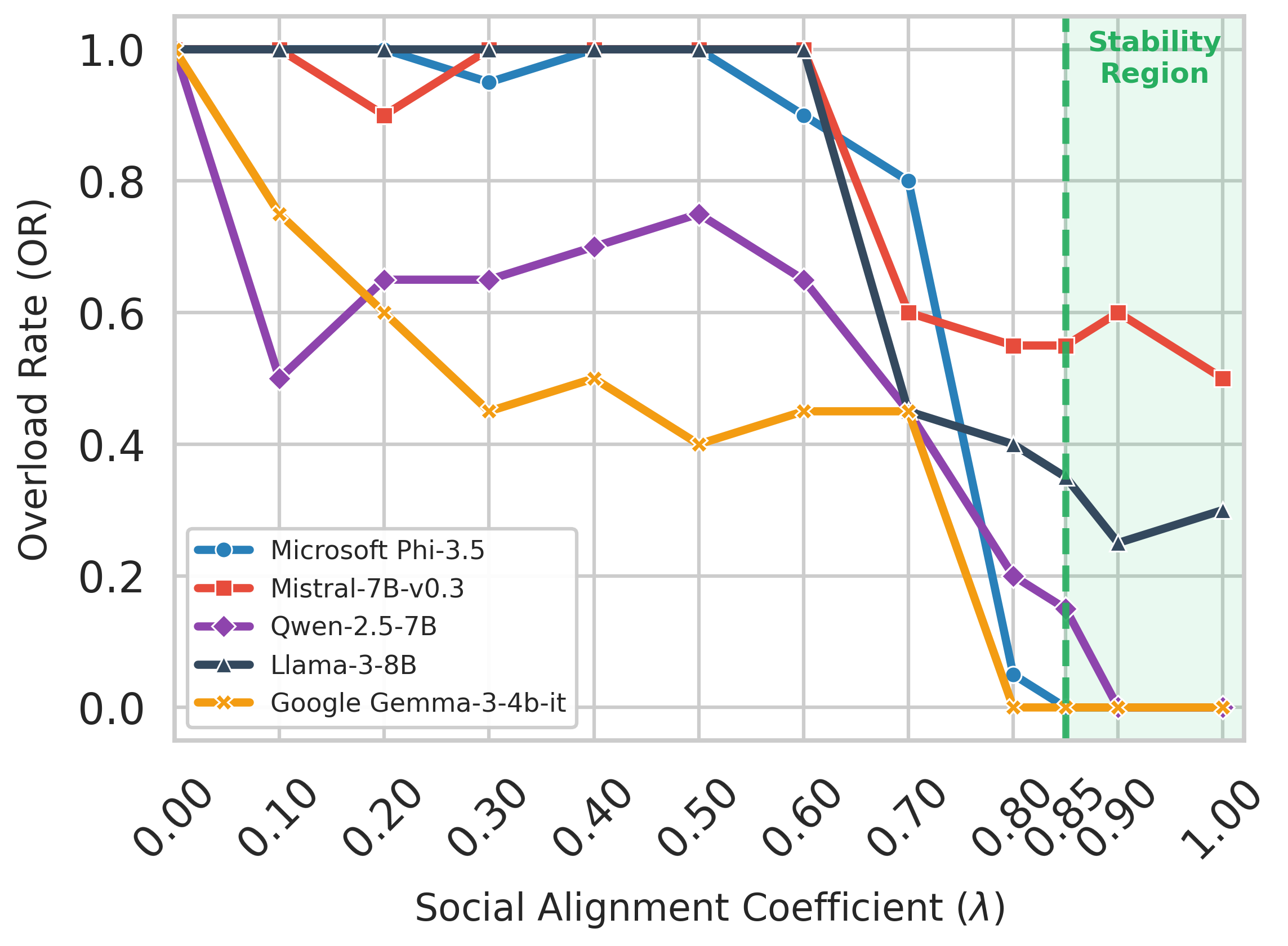}
\caption{\textbf{Overload rate versus social alignment.}
Overload rate (OR) as a function of the social alignment coefficient $\lambda$ in the $n=5$ congestion environment with capacity $C=20$ and congestion severity $\beta=1.6$, evaluated across five base language models.
OR is the fraction of timesteps in a $T=20$ step episode for which the aggregate load exceeds capacity, $X_t>C$.
Consistent with Theorem~1, overload remains high for small $\lambda$ and declines sharply as $\lambda$ approaches the predicted threshold $\lambda^*=(n-\beta)/(n-1)=0.85$, with residual differences across models attributable to stochastic candidate generation and finite candidate sets.}
    \label{fig:intro}
\end{figure}

\begin{figure}[t]
    \centering
    \includegraphics[width=0.85\linewidth]{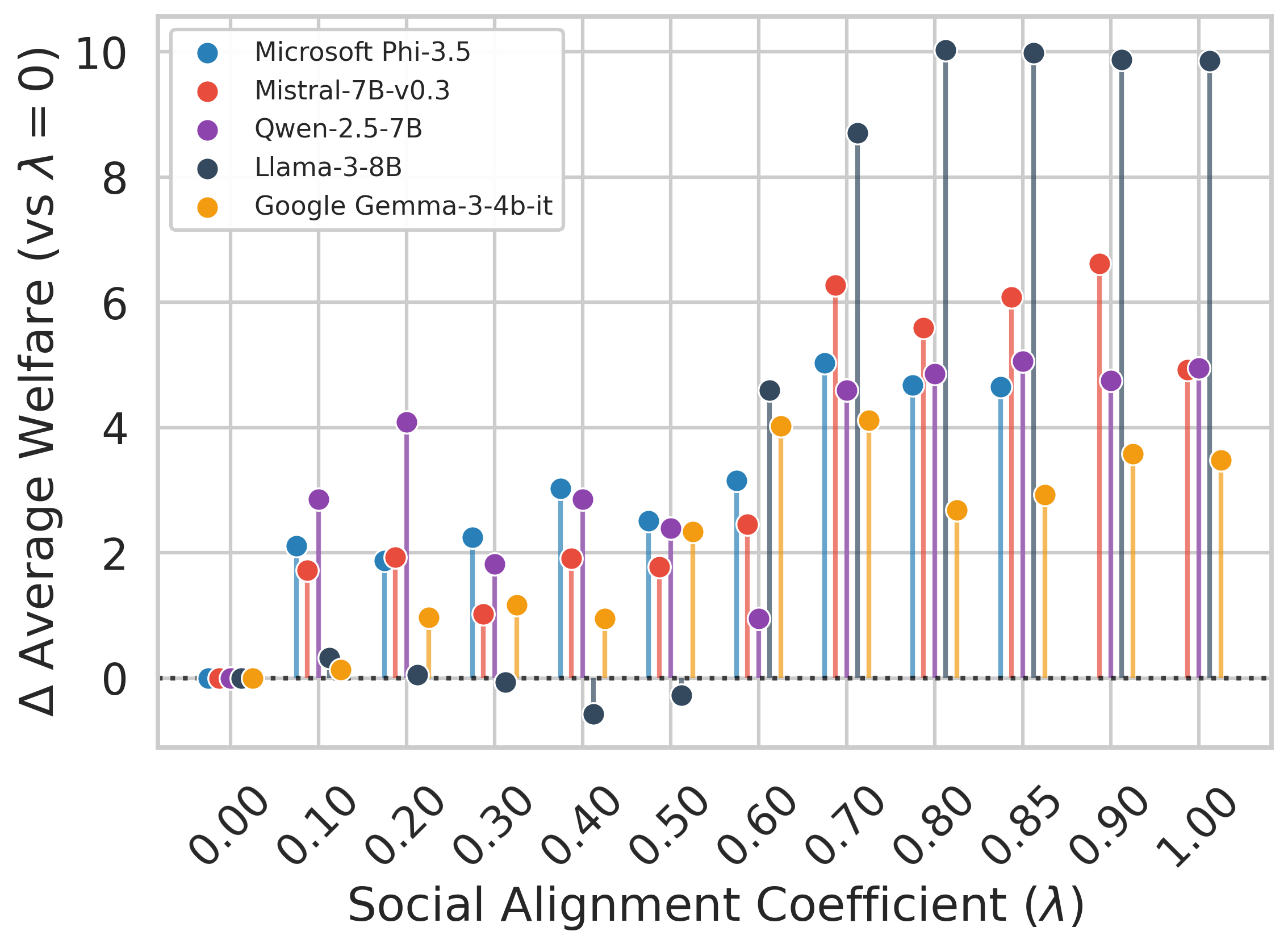}
\caption{\textbf{Welfare improvement relative to baseline.}
Change in episode-average realized welfare relative to the $\lambda=0$ baseline, $\Delta\overline{W}(\lambda)=\overline{W}(\lambda)-\overline{W}(0)$, for the same setting as Figure~1.
Realized welfare at time $t$ is $W_t=\sum_{i=1}^{n} r_{i,t}=X_t-\beta\max(0,X_t-C)$ and $\overline{W}(\lambda)=\frac{1}{T}\sum_{t=1}^{T}W_t$ with $T=20$.
Across models, welfare improves most strongly in the same range of $\lambda$ where overload collapses, indicating that SWI reduces congestion losses while maintaining high utilization near capacity rather than inducing uniformly low demand.}
    \label{fig:intro}
\end{figure}

\textbf{Environment.}
We consider a repeated resource congestion game with $n=5$ agents sharing a single capacity constrained resource with capacity $C>0$.
At each step $t$, agent $i$ selects a nonnegative demand $x_{i,t}\in[0,x_{\max}]$.
The aggregate load is $X_t=\sum_{j=1}^{n} x_{j,t}$.
Agents receive the intrinsic reward
\begin{equation}
r_{i,t}=x_{i,t}-\frac{\beta}{n}\max(0, X_t-C),
\end{equation}
where $\beta>1$ controls the severity of congestion.
We set $\beta=1.6$.
The term $x_{i,t}$ encourages higher individual demand, while the congestion term penalizes collective overload when $X_t>C$.

\textbf{Decision rules and conditions.}
Agents act using the SWI rule described in Section~4.1, which selects actions by maximizing a convex combination of a private score and a group score.
The experimental manipulation is the social alignment coefficient $\lambda\in[0,1]$.
When $\lambda=0$, the decision rule reduces to standard individually rational best response behavior that corresponds to the usual Nash incentive structure in this game, since each agent optimizes only its private score.
For $\lambda>0$, agents trade off private score and estimated group welfare according to $(1-\lambda)S_{\text{self}}+\lambda S_{\text{group}}$.
We evaluate $\lambda\in\{0,0.1,\ldots,1.0\}$.

\textbf{Implementation details.}
The private score is
set by normalizing the action $a=x$, $S_{\text{self}}(a)=a/x_{\max}$.
The group score is determined using a normalized estimate of social welfare: 
\begin{equation}
S_{\text{group}}(a) \;=\; \text{clip}_{[0,1]}\!\left(\frac{\widehat{W}_t(a)}{W_{\text{ref}}+\varepsilon}\right),
\end{equation}
where $\mathrm{clip}_{[0,1]}(z)=\min(1,\max(0,z))$, $W_{\mathrm{ref}}>0$ is a fixed reference value (e.g., $W_{\mathrm{ref}}=C$), and $\varepsilon>0$ is a small constant for numerical stability. 
Since welfare depends on other agents' demands, we estimate other agents' typical demand using an exponential moving average of previously observed aggregate behavior, then compute predicted welfare for each candidate action under that belief model.
Details of group score are given in the Appendix. 
Each episode consists of $20$ steps.
Small and efficient language models have been shown to achieve strong performance on targeted reasoning and tool use tasks, making them attractive building blocks for agentic systems that require low latency and high throughput \cite{small3, mumcu2024fast, belcak2025small}. Hence, we run the same protocol across five small and efficient language models, namely Microsoft Phi-3.5-mini-instruct \cite{phi}, Mistral-7B-Instruct-v0.3 \cite{mistral}, Qwen2.5-7B-Instruct \cite{qwen}, Llama-3-8B-Instruct \cite{llama}, and Gemma-3-4B-IT \cite{gemma}, to assess whether SWI exhibits consistent behavior across model families rather than being driven by idiosyncrasies of a single model.

\begin{figure}[t]
    \centering
    \includegraphics[width=0.85\linewidth]{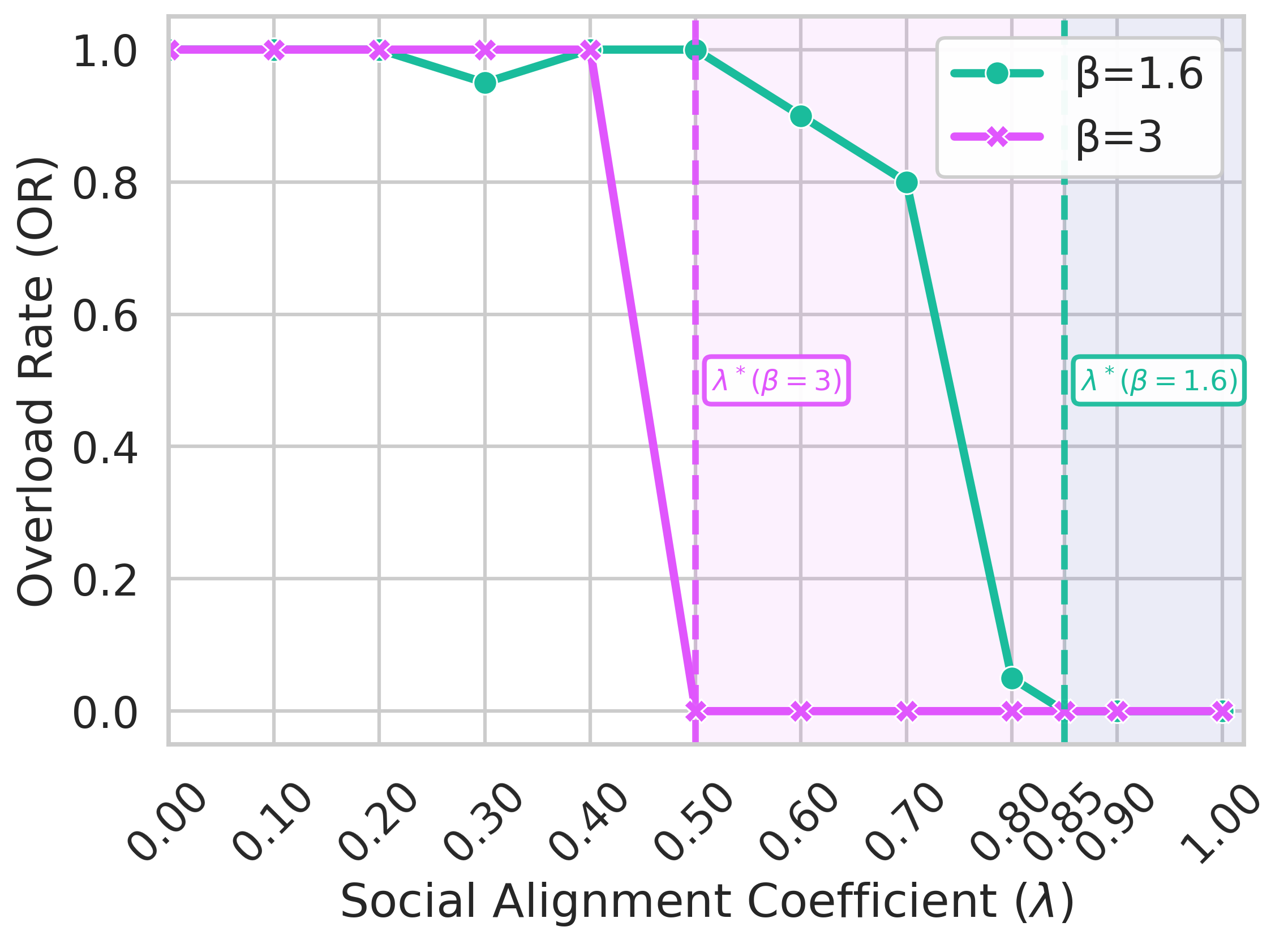}
  \caption{\textbf{Ablation on congestion severity: overload transition shifts with $\beta$.}
Overload rate (OR) as a function of $\lambda$ for Microsoft Phi-3.5-mini-instruct under two congestion severities, $\beta\in\{1.6,3\}$, with $n=5$, $C=20$, and $T=20$ fixed.
Vertical dashed lines mark the theoretical thresholds from Theorem~1, $\lambda^*=(n-\beta)/(n-1)$, yielding $\lambda^*=0.85$ for $\beta=1.6$ and $\lambda^*=0.5$ for $\beta=3$; shaded regions indicate the predicted stable side $\lambda\ge\lambda^*$.
Empirically, the onset of near-zero overload occurs at substantially lower $\lambda$ for larger $\beta$, consistent with the predicted shift in the transition.}

    \label{fig:intro}
\end{figure}

\begin{figure}[t]
    \centering
    \includegraphics[width=0.85\linewidth]{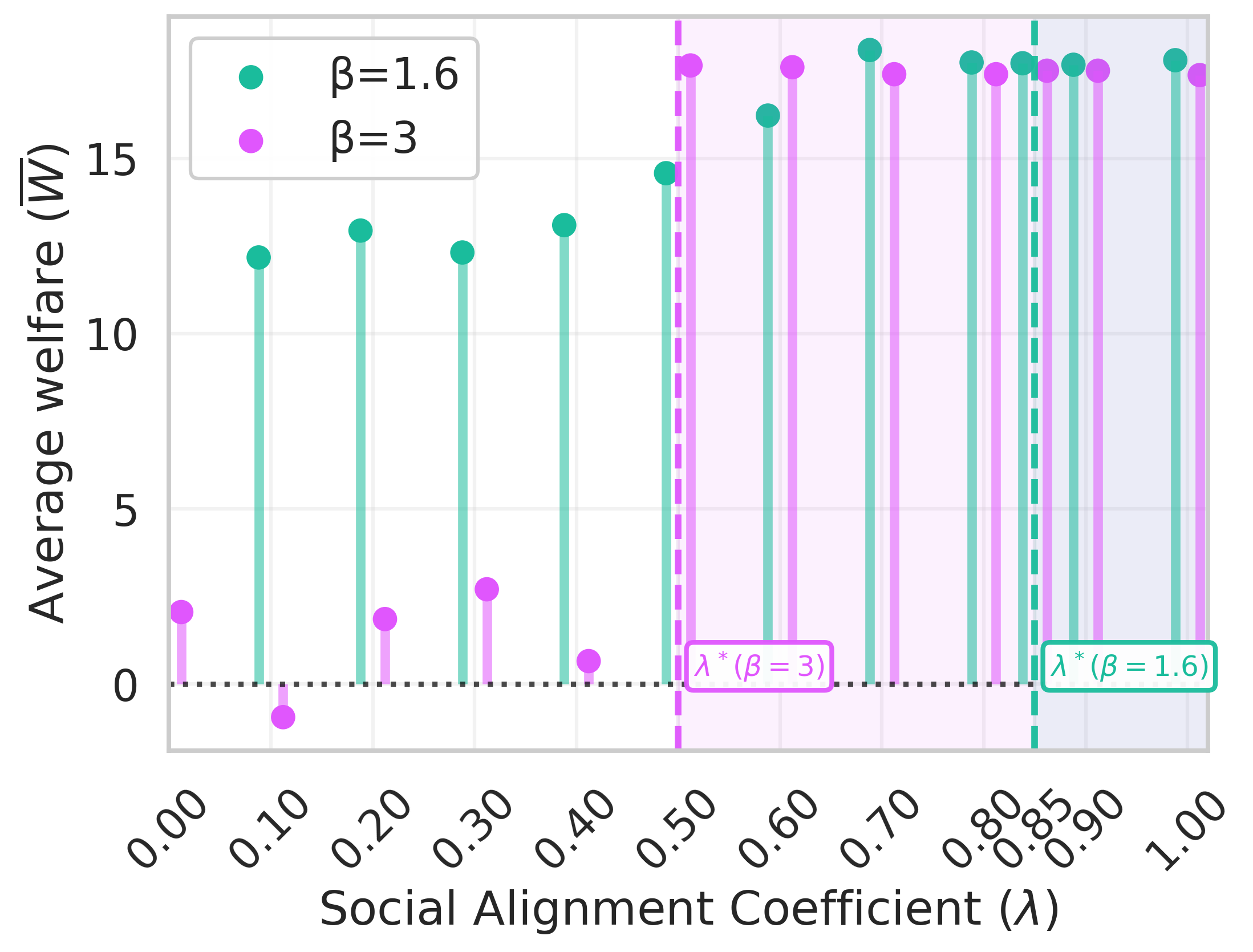}
\caption{\textbf{Ablation on congestion severity: larger $\beta$ amplifies welfare losses under overload.}
Episode-average realized welfare for Microsoft Phi-3.5-mini-instruct under $\beta\in\{1.6,3\}$ as a function of $\lambda$ in the same environment as Figure~3 ($n=5$, $C=20$, $T=20$).
Realized welfare is $W_t=\sum_{i=1}^{n} r_{i,t}=X_t-\beta\max(0,X_t-C)$ and is averaged over timesteps within an episode.
When $\lambda$ is below the corresponding threshold, overload is frequent and welfare is substantially lower for $\beta=3$ than for $\beta=1.6$, reflecting the stronger congestion penalty.
For $\lambda\ge\lambda^*$, overload is eliminated and welfare recovers to the near-capacity regime.}

    \label{fig:intro}
\end{figure}

\textbf{Metrics.}
We report two metrics: (i) social welfare and (ii) overload rate (OR).
Realized social welfare at step $t$ is defined as $W_t=\sum_{i=1}^{n} r_{i,t}$, which in this environment can be written as
\[
W_t \;=\; X_t \;-\; \beta\max(0, X_t-C).
\]
The second term captures the damage of overload: when total demand exceeds capacity, welfare is reduced in proportion to the excess (modeling congestion/service degradation under contention).
Overload rate, denoted OR, is the fraction of steps in an episode for which $X_t>C$.

To improve interpretability in plots, we report the change in average welfare relative to the $\lambda=0$ condition,
\begin{equation}
\Delta \overline{W}(\lambda) \;=\; \overline{W}(\lambda) - \overline{W}(0),
\end{equation}
where $\overline{W}(\lambda)=\frac{1}{T}\sum_{t=1}^{T} W_t$ is the episode average welfare over $T=20$ steps.
Positive values of $\Delta \overline{W}(\lambda)$ indicate an improvement in realized welfare compared to individually rational play.

\textbf{Results and relation to theory.}
Theorem~1 gives a critical threshold $\lambda^*=\frac{n-\beta}{n-1}$ above which, in the overloaded regime, agents no longer have marginal incentive to increase demand.
Substituting $n=5$ and $\beta=1.6$ yields $\lambda^*=\frac{5-1.6}{4}=0.85$.
Figure~1 plots the overload rate as a function of $\lambda$ across five base language models.
While the qualitative pattern is shared, the models differ in how sharply and how early overload declines.
For example, some models exhibit a gradual reduction in overload over intermediate $\lambda$, whereas others remain near-maximally overloaded until a more abrupt transition close to the threshold.
These differences are expected because action selection is mediated by finite candidate sets and stochastic LLM generation, so the induced dynamics need not be identical across model families even under the same scoring rule.
Importantly, across all models the onset of sustained stability occurs only when $\lambda$ is sufficiently large, and the sharpest improvements concentrate near the theoretically predicted value.
Figure~2 reports welfare improvement relative to the $\lambda=0$ baseline, $\Delta\overline{W}(\lambda)$.
Across models, welfare increases most strongly over the same range where overload collapses, indicating that SWI reduces congestion losses while maintaining substantial aggregate load rather than inducing uniform under-utilization.
Taken together, Figures~1--2 support the theoretical picture of a coordination-driven phase transition governed primarily by $\lambda$ and $\beta$, with residual variation attributable to stochastic decision noise and model-specific candidate generation.

\textbf{Ablation: Varying congestion severity.}
To test whether the empirical transition tracks the dependence on congestion severity predicted by Theorem~1, we repeat the $\lambda$ sweep for a single base model (Microsoft Phi-3.5-mini-instruct) under two values of the congestion parameter, $\beta\in\{1.6,3\}$, holding the environment ($n=5$, $C=20$), episode length ($T=20$), and the SWI implementation details fixed.
Theorem~1 predicts that the critical value shifts from $\lambda^*=\frac{5-1.6}{4}=0.85$ to $\lambda^*=\frac{5-3}{4}=0.5$ as $\beta$ increases.

Figure~3 plots overload rate as a function of $\lambda$ for both $\beta$ settings, with vertical markers indicating the corresponding theoretical thresholds.
Consistent with the prediction, the onset of sustained stability occurs at substantially lower $\lambda$ when $\beta$ is larger. For $\beta=3$, the system is persistently overloaded for $\lambda<0.5$ and becomes consistently safe for $\lambda\ge 0.5$, whereas for $\beta=1.6$, overload remains high until $\lambda$ approaches $0.85$.

Figure~4 reports $\overline{W}$ the episode-average realized welfare (rather than $\Delta\overline{W}$) to directly illustrate how congestion severity scales the welfare loss under overload.
In the low-$\lambda$ regime where overload is frequent, welfare is markedly lower under $\beta=3$ than under $\beta=1.6$, reflecting the larger congestion penalty applied to excess load.
Once $\lambda$ exceeds the respective threshold and overload is eliminated, welfare closely tracks aggregate load and becomes comparable across $\beta$, as the penalty term vanishes in the safe regime.
Together, Figures~3--4 support two conclusions: increasing $\beta$ shifts the phase transition to smaller $\lambda$ as predicted by Theorem~1, and it amplifies the welfare cost of remaining in the overloaded regime, making the benefits of coordination more pronounced.

\section{Related Work}
Multi-agent LLM systems are rapidly becoming a standard paradigm for building agentic applications, in which multiple LLM-driven components coordinate through natural-language interaction, tool use, and delegated roles.
Recent frameworks make this explicit by supporting structured multi-agent conversation, role specialization, and iterative collaboration, enabling teams of agents to decompose tasks and negotiate intermediate decisions. \cite{agentic1,agentic2,agentic3, mumcu2025llm, mumcu2026agentic, AutoML1, AutoML2, AutoML3}.
A common theme in this line of work is \emph{procedural orchestration}, in which  collaboration is achieved through interaction protocols such as role assignment, turn-taking, debate, voting, and reflection.
While these designs are effective in practice, they typically do not model the resulting multi-agent system as an induced game, nor do they provide incentive-level predictions for when collaboration will remain stable under shared-resource constraints.

A related line of work studies emergent behavior and coordination in simulated LLM societies.
Generative agent architectures show that LLM agents can exhibit plausible social behavior, memory, and coordination when embedded in an interactive environment \cite{park2023generative}.
These studies provide valuable evidence that complex multi-agent dynamics can emerge from inference-time prompting, but they are typically descriptive and do not characterize how stability should depend on environment parameters or on explicit incentive-shaping mechanisms.

Multi-agent reinforcement learning (MARL) and cooperative AI provide a complementary foundation for understanding and solving social dilemmas.
MARL methods, often formulated under centralized training with decentralized execution, can produce cooperative policies in resource-constrained settings, and cooperative AI emphasizes designing agents that optimize joint outcomes rather than purely individual objectives \cite{marl1, marl2, marl3, marl4, marl5, marl6}.
However, these approaches generally rely on training, environment-specific optimization, or centralized control.
Our goal is different; we seek an inference-time mechanism that can be applied to off-the-shelf LLMs without parameter updates, while still yielding principled predictions about system-level stability.

An emerging line of work applies game theoretic evaluation to multi agent behavior in large language models by placing them in repeated games, negotiation settings, and other strategic benchmarks. These studies show that LLM agents can exhibit nontrivial strategic patterns and that inference time workflows, such as structured reasoning prompts and equilibrium guided procedures, can improve decision quality and benchmark performance. At the same time, this literature is typically organized around behavioral characterization and capability measurement, rather than around incentive level analysis of a general orchestration mechanism for agentic systems with shared resources and externalities. In contrast, our focus is on an explicit inference time objective shaping rule that interpolates private utility and estimated social welfare, and on characterizing when this rule changes equilibrium structure and system stability in congested regimes. \cite{akata2025playing, fengsurvey, hua2024game, sun2025game, wang2024tmgbench}

Our work is most closely related to this game-theoretic perspective, but differs in focus and contribution.
Rather than primarily benchmarking strategic capability or prescribing an interaction protocol, we treat orchestration as \emph{inference-time incentive design}. Specifically, we introduce a welfare-weighted objective (SWA) that modifies the game agents effectively play by interpolating between private and group scores.
In a canonical shared-resource congestion game, this yields a closed-form critical threshold $\lambda^*=(n-\beta)/(n-1)$ predicting a phase transition from persistent overload to stable near-capacity operation.
We validate this prediction in simulation, including an ablation that shifts the transition by varying congestion severity.
In this sense, we connect agentic LLM system design to an analytically tractable mechanism whose stability properties can be predicted from environment parameters and tested empirically.

\section{Discussion and Future Directions}

Our results demonstrate that Socially-Weighted Alignment (SWA) induces a predictable phase transition in a shared-resource congestion game. This transition stabilizes the system once the social weight $\lambda$ exceeds a critical threshold. While derived in a stylized environment, these findings suggest a general blueprint for addressing coordination failures in agentic systems with negative externalities.

\subsection{The Congestion Game as a Canonical Model}
We use the congestion game not merely as a traffic simulation but as a canonical model for negative externalities in multi-agent workflows. The core dynamic, where individual intensity $x_i$ yields diminishing returns for the group once a capacity $C$ is breached, maps to bottlenecks in deployed multi-agent LLM systems.

\textbf{Collaborative Software Engineering:} In multi-agent coding teams, the shared resource can be viewed as the coordination bandwidth of a repository and its test suite. Congestion manifests as merge conflicts, regressions, and repair cycles caused by uncoordinated edits. An individually aggressive agent might maximize its own contribution volume $R_i$, but without social weighting where $\lambda > 0$, this behavior degrades overall build stability $W$. This scenario is analogous to the overload regime observed in our results.

\textbf{Shared Context Windows:} For agents sharing a conversation history, the context window is a finite resource $C$. Here, $x_i$ corresponds to an agent's token contribution, and overload occurs when excessive verbosity causes earlier critical instructions to be evicted, reducing downstream performance and thus group welfare $W$. SWA provides a mechanism for selecting candidates that trade off local gains against the estimated marginal cost to the shared context budget.

\textbf{API Rate Limits:} For tool-using agents sharing an API key, the capacity $C$ represents a global rate limit. Without SWA, agents retrying failed requests can trigger a thundering herd effect that locks out the entire system. The phase transition at $\lambda^*$ corresponds to the point where agents internalize the risk of global lockout through the welfare term.

\subsection{The Cost of Information and Estimation Noise}
Our analytical results assume that agents can construct a reasonable estimate of the group welfare $S_{\mathrm{group}}$. In our experiments, agents approximate this quantity via an exponential moving average of aggregate demand. In more open-ended settings, such as natural language debate or creative writing, social load is harder to quantify.

A key limitation of the current SWA implementation is its reliance on this signal. If the estimate of $S_{\mathrm{group}}$ is highly noisy or delayed, the stabilizing effect of $\lambda$ may be dampened. Future work should investigate robust SWA variants that remain stable under partial observability or adversarial noise, for example by incorporating uncertainty quantification or using conservative bounds in the inference-time scoring rule.

\subsection{Heterogeneity and Role Specialization}
The current analysis assumes a symmetric game with $n$ homogeneous agents. However, practical agentic systems often involve heterogeneous roles, such as a Manager agent and a Coder agent, with distinct action spaces and different impacts on shared resources.

Extending the SWA framework to asymmetric games is a promising future direction. We hypothesize that the scalar threshold $\lambda^*$ will generalize to a critical surface over agent-specific weights $\{\lambda_1, \dots, \lambda_n\}$. Characterizing this surface would allow system designers to assign higher social responsibility to high-impact agents, such as those with write access to a database, while allowing read-only agents to remain more self-interested.

\vspace{-2mm}
\section{Conclusion}
Multi-agent LLM systems make strategic interaction unavoidable. When agents share limited resources, locally aligned inference can produce negative externalities that destabilize the system. In this work we proposed Socially-Weighted Alignment (SWA), an inference-time mechanism that interpolates between a private objective and a group-welfare score through a single social weight $\lambda$. In a canonical shared-resource congestion game, we analyzed the induced incentives and derived a closed-form critical threshold $\lambda^*=(n-\beta)/(n-1)$ that predicts a phase transition from persistent overload to stable operation near capacity. We then instantiated SWA as a simple candidate-scoring protocol for LLM agents and empirically validated the predicted threshold behavior in simulation across multiple base models, including an ablation that shifts the transition when congestion severity is increased.

Beyond the specific environment studied here, the main takeaway is that inference-time objective shaping can provide a lightweight, model-agnostic route to stabilizing agentic systems without retraining or centralized control. An important direction for future work is to extend the analysis and empirical evaluation to richer interaction structures, heterogeneous agents, and more realistic resource constraints, as well as to study robustness under alternative belief models and candidate-generation procedures. More broadly, we view SWA as a step toward principled design of multi-agent LLM systems in which alignment is defined not only by individual behavior, but also by the stability and efficiency of the collective dynamics they induce.

\setlength{\bibsep}{5.5pt}
\bibliography{main}
\bibliographystyle{plainnat}

\newpage
\appendix
\onecolumn

\section{Mathematical Description of Group Score with Exponential Moving Average}

We define a group score for each candidate action by combining (i) a belief over other agents' typical demand maintained via an exponential moving average (EMA), with (ii) the paper's welfare function under the congestion model.

\paragraph{Belief model via EMA.}
Considering $n$ agents, at time step $t-1$, agent $j$ requests $x_{j,t-1}\in[0,x_{\text{max}}]$. Using the average request
\begin{equation}
\bar{x}_{t-1} \;=\; \frac{1}{n}\sum_{j=1}^{n} x_{j,t-1},
\end{equation}
we maintain an EMA estimate of the mean request, $\mu_t$, updated as
\begin{equation}
\mu_t \;=\; (1-\alpha)\mu_{t-1} \;+\; \alpha\,\bar{x}_{t-1},
\end{equation}
where $\alpha\in(0,1]$ is the smoothing parameter.

\paragraph{Predicted total demand under a candidate action.}
When evaluating a candidate request $a$ for a focal agent $i$ at time $t$, we approximate other agents' requests by $\mu_t$. The predicted total demand is:
\begin{equation}
\widehat{X}_t(a) \;=\; a \;+\; (n-1)\mu_t.
\end{equation}

\paragraph{Predicted welfare under the congestion model.}
Using the paper's welfare definition with capacity $C$ and overload severity $\beta$, the predicted welfare corresponding to $\widehat{X}_t(a)$ is:
\begin{equation}
\widehat{W}_t(a) \;=\; \widehat{X}_t(a) \;-\; \beta\,\max\!\bigl(0,\widehat{X}_t(a)-C\bigr).
\end{equation}

\paragraph{Group Score.}
We normalize predicted welfare into a bounded group score in $[0,1]$ using a fixed reference value $W_{\mathrm{ref}}>0$ (e.g., $W_{\mathrm{ref}}=C$) and a small $\varepsilon>0$ for numerical stability:
\begin{equation}
S_{\mathrm{group}}(a) \;=\; \mathrm{clip}_{[0,1]}\!\left(\frac{\widehat{W}_t(a)}{W_{\mathrm{ref}}+\varepsilon}\right),
\end{equation}
where $\mathrm{clip}_{[0,1]}(z)=\min(1,\max(0,z))$.

\paragraph{Optional real-time stabilizations.}
In some settings, two small practical modifications can be used for additional stability:

\begin{enumerate}
\item \textbf{Safety margin (effective capacity).}
A slack parameter $m \ge 0$ can be introduced via $C_{\mathrm{eff}} = C - m$, and $C_{\mathrm{eff}}$ can be used in the overload term:
\begin{equation}
\widehat{W}_t^{(\mathrm{eff})}(a) \;=\; \widehat{X}_t(a) \;-\; \beta\,\max\!\bigl(0,\widehat{X}_t(a)-C_{\mathrm{eff}}\bigr).
\end{equation}

\item \textbf{Capacity-centering.}
A mild quadratic centering term (strength $\gamma \ge 0$) can be added around $C_{\mathrm{eff}}$:
\begin{equation}
\widehat{W}_t^{(\mathrm{cent})}(a) \;=\; \widehat{W}_t^{(\mathrm{eff})}(a) \;-\; \gamma\bigl(\widehat{X}_t(a)-C_{\mathrm{eff}}\bigr)^2.
\end{equation}
\end{enumerate}

When these are used, $S_{\mathrm{group}}(a)$ is computed by substituting $\widehat{W}_t(a)$ with the chosen variant above and normalizing with a consistent reference value.

\paragraph{Code.} An example implementation is available at: \url{https://github.com/furkanmumcu/swa}

\end{document}